\documentclass[12pt]{article}
\usepackage{enumerate}
\usepackage{amsfonts}
\usepackage{latexsym}
\usepackage{color}
\usepackage{graphicx}
\usepackage{wrapfig}

\usepackage{amsmath,amssymb,amsthm}

\theoremstyle{plain}
\newtheorem{theorem}{Theorem}
\newtheorem{lemma}[theorem]{Lemma}
\newtheorem{proposition}[theorem]{Proposition}
\newtheorem{corollary}[theorem]{Corollary}

\theoremstyle{definition}

\newtheorem{remark}[theorem]{Remark}

\newcommand{\HH}{\mathcal H}
\newcommand{\St}{\mathfrak S}
\newcommand{\Tr}{\operatorname{Tr}}
\newcommand{\B}{\mathfrak{B}}
\newcommand{\T}{\mathfrak{T}}

\newcommand{\Bsa}{\mathfrak{B}_{\mathrm{sa}}}
\newcommand{\supp}{\operatorname{supp}}

\newcommand{\rank}{\operatorname{rank}}

\newcommand{\shs}{\hspace{1pt}}

\newcommand{\ext}{\mathrm{ext}}
\begin{document}
\title{Log-Sobolev inequality, von Neumann entropy and Entanglement of Formation}
\author{A.S.Holevo, M.E.~Shirokov\\
Steklov Mathematical Institute, Moscow, Russia}
\date{}
\maketitle

\maketitle

\begin{abstract}  We present two results derived from the sharp log-Sobolev inequality for the uniform measure on a complete graph which concern
the von Neumann entropy and the Entanglement of Formation of a state of finite and infinite-dimensional quantum systems.

The first result is a sharp Lipschitz lower semicontinuity bound for the von Neumann entropy at any mixed state $\rho$ with uniform positive spectrum (i.e. a state proportional to a projector) w.r.t. the fidelity deficit:  the inequality   $\,S(\rho)-S(\sigma)\leq C_\rho(1-F(\rho,\sigma))\,$ valid for any state $\sigma$, where $C_{\rho}$ is a constant depending on the rank of $\rho$.

The second result is a sharp  Lipschitz  lower semicontinuity bound for the Entanglement of Formation at any pure state $\rho$ with uniform positive spectrum of marginal states w.r.t. the fidelity deficit: the inequality
$\,E_F(\rho)-E_F(\sigma)\leq C_\rho(1-\mathrm{Tr}\rho\sigma)\,$ valid for any state $\sigma$, where $C_{\rho}$ is a constant depending on the Schmidt rank of $\rho$.

In both cases the optimal constant $C_\rho$ is equal to the optimal constant $K_{d}$ in the log-Sobolev inequality for the complete graph with $d$ vertices: in the
first case $d=\rank \rho$,  in the second one $d=\rank \rho_A=\rank \rho_B$.

The authors are grateful to  GPT 5.6 for valuable discussion and technical help in preparing this note.
\end{abstract}

\tableofcontents

\section{Introduction}

In this article we show how the log-Sobolev inequality for the uniform measure on a complete graph (obtained in \cite{LS-1} and described in Section 2)
can be used to obtain sharp
lower semicontinuity bounds for two important characteristics of quantum
systems: the von Neumann entropy and the Entanglement of Formation.

The term "lower semicontinuity bound for a function $f$ at a state $\rho$" means the
inequality
\[
  f(\rho) - f(\sigma) \le B_{\rho}\bigl(d(\rho,\sigma)\bigr)
\]
valid for this state $\rho$ and  any other state $\sigma$ of a quantum system, where $B_{\rho}$ is the
function depending on $\rho$ (and not depending on $\sigma$) and $d(\rho,\sigma)$
is a measure of divergence between $\rho$ and $\sigma$ (the trace norm distance
$\,\tfrac{1}{2}\|\rho-\sigma\|$, the fidelity deficit $\,1-F(\rho,\sigma)$, the Bures distance $\,\beta(\rho,\sigma)$,
etc.).

The von Neumann entropy $S$ (the entropy, in what follows) is the main
characteristic of a quantum state $\rho$ \cite{H-SCI,W,Wilde}. It is defined by the formula
\begin{equation}\label{ent-def}
S(\rho)=\operatorname{Tr}\eta(\rho),
\end{equation}
where  $\eta(x)=-x\ln x$ if $x>0$ and $\eta(0)=0$.

The importance of the entropy explains a
number of works devoted to quantitative continuity analysis of this
characteristic in both finite and infinite-dimensional quantum systems
(see the brief overview in  \cite[the Introduction]{BDJSH}).

To solve one specific problem (described in \cite{H&Sh-5}), we needed to get a lower semicontinuity bound for the entropy
at a given state $\rho$ of a  quantum system described by a finite-dimensional Hilbert space
$\mathcal{H}$ of the form
\begin{equation}\label{b2}
S(\rho)-S(\sigma)\leq C_\rho \beta^2(\rho,\sigma),\quad \forall\sigma\in \St(\mathcal H),
\end{equation}
where  $\beta(\rho,\sigma)$ is the Bures distance between $\rho$ and $\sigma$ and $\St(\mathcal H)$ denotes
the set of all states on $\HH$. It turned out that this cannot be done for each state $\rho$, since  one can show that
\[
  \sup_{\sigma \in \mathfrak{S}(\mathcal{H})}
  \frac{S(\rho) - S(\sigma)}{\beta^{2}(\rho,\sigma)} = +\infty
\]
for some states $\rho$ in $\St(\mathcal H)$. Nevertheless, the log-Sobolev inequality allows us to show that the above supremum is finite
for all finite-rank states $\rho$ of finite and infinite-dimensional quantum systems
which has uniform positive spectrum. In the first subsection of the main part of this article we show that inequality (\ref{b2})
holds for a state $\rho$ with the  spectrum
\begin{equation}\label{ud}
\Bigl(\underbrace{\frac{1}{d},\frac{1}{d},...,\frac{1}{d}}_{d\textrm{ items}},0,0...\Bigr)
\end{equation}
with $C_{\rho}=K_{d}$ -- the optimal constant in the log-Sobolev inequality for the complete graph with $d$ vertices. Moreover, the
stronger version of this inequality with $\beta^{2}(\rho,\sigma)$ replaced by $(1-F(\rho,\sigma))$ is valid, and in this form the
inequality becomes sharp (optimal).\smallskip

The Entanglement of Formation (EoF) is one of the basic entanglement measures in bipartite quantum systems \cite{P&V,4H}. In general infinite-dimensional settings
the EoF of a state $\rho$ of a composite system $AB$ is defined as the closed (lower semicontinuous) convex  roof extension of the function
$\rho\mapsto S(\rho_A)=S(\rho_B)$ on the set $\ext\St(\HH_{AB})$ of pure states of the system $AB$ to the set $\St(\HH_{AB})$ of all states of this system, which
can be expressed as
\begin{equation}\label{EF-c}
E_F(\rho)=\!\inf_{\int\varrho\mu(d\varrho)=\rho}\int_{\ext\St(\HH_{AB})}S(\varrho_A)\mu(d\varrho),
\end{equation}
where the  infimum is over all Borel probability measures\footnote{It is reasonable to consider such measures as generalized ensembles of quantum states \cite{H-Sh-2,H-SCI}.} on the set of pure states in $\St(\HH_{AB})$ with the barycenter $\rho$ \cite{EM, Lami-new}. If $A$ and $B$ are finite-dimensional quantum systems then the expression (\ref{EF-c}) is reduced to the well-known formula
\begin{equation}\label{EF-d}
E_F(\rho)=\inf_{\sum_k\!p_k\rho_k=\rho}\sum_kp_kS([\rho_k]_A),
\end{equation}
where the  infimum is over all finite  ensembles $\{p_k, \rho_k\}$ of pure states in $\St(\HH_{AB})$ with the average state $\rho$ \cite{Bennett}.
\smallskip

The second part of this  note is devoted to the problem of finding, for a given state
$\,\rho\in \St(\HH_{AB}),$  estimates of the form
\begin{equation}\label{one}
    E_F(\rho)-E_F(\sigma)
    \leq
    C_\rho\,\frac{\|\rho-\sigma\|_1}{2},
    \qquad \forall\,\sigma \in \St(\HH_{AB}),
\end{equation}
where $C_\rho$ is a constant (as small as possible) depending on $\rho$ and not depending on
$\sigma$.

It is shown in \cite{H&Sh-5} that the validity of (\ref{one}) with some $C_\rho>0$ is
equivalent to the existence of a supporting affine functional  $\,\ell_\rho(\sigma)=\Tr\Lambda_\rho \sigma\,$  for the EoF
at the state $\rho$ with Hermitian operator
$\Lambda_\rho$ whose spectral diameter does not exceed $C_\rho$.\footnote{This means that $\,\ell_\rho(\rho)=E_F(\rho)\,$ and $\,\ell_\rho(\sigma)\leq E_F(\sigma)\,$ for any state $\sigma$ in $\St(\HH_{AB})$.}

It is also shown in \cite{H&Sh-5} that in any bipartite quantum systems (in particular, in
 $2$-qubit system) there exist states $\rho$ such that
\begin{equation}
    \sup_{\sigma\in \St(\HH_{AB})}
    \frac{E_F(\rho)-E_F(\sigma)}
         {\|\rho-\sigma\|_1}
    =+\infty,
\end{equation}
which means that (\ref{one}) cannot be valid with any finite $C_\rho$.

It is somewhat surprising that the above problem is nontrivial even in the
case when the state $\rho$ is pure. We and GPT-5.6 have found neither a general proof
nor counterexamples which would confirm or refute the hypothesis of the existence of a supporting affine functional  for the EoF
at any pure bipartite state.\footnote{Using Wootter's formula and the help of GPT-5.6, we tried to build an example showing the lack of supporting affine functional for the EoF at some pure state of $2$-qubit system (similar to the example presented in \cite{H&Sh-5}). However, GPT-5.6 was not possible to find such an example and  gave a non-strict proof that such an example does not exist at all.}

Fortunately, there is an important class of bipartite pure states for which
the above problem has a simple sharp solution. This class consists of pure states whose set of  Schmidt numbers forms the distribution (\ref{ud}) for some $d\geq 2$, (i.e. such
states whose marginals has the spectrum (\ref{ud})).

It is remarkable that this sharp  solution turned out to be closely related to the sharp log-Sobolev inequality for a complete graph. Moreover,
it turns out that the optimal constant $C_{\rho}$ in (\ref{one}) in the case when  $\rho$ is a pure state with the Schmidt numbers in (\ref{ud}) coincides the optimal constant $K_{d}$ in the log-Sobolev inequality for complete graph with $d$ vertices.

\section{Preliminaries}

Throughout the article, we assume that $\mathcal{H}_X$ is a separable Hilbert space describing a quantum system $X$,
$\mathfrak{B}_{\rm sa}(\mathcal{H}_X)$ is the real Banach space of all Hermitian bounded operators on $\mathcal{H}_X$ with the operator norm $\|\cdot\|$ and $\mathfrak{T}(\mathcal{H}_X)$ is the (complex)
Banach space of all trace-class
operators on $\mathcal{H}_X$  with the trace norm $\|\!\cdot\!\|_1$. Let
$\mathfrak{S}(\mathcal{H}_X)$ be  the set of quantum states (positive operators
in $\mathfrak{T}(\mathcal{H}_X)$ with unit trace) \cite{H-SCI,Wilde}.\smallskip

Write $I_{X}$ for the unit operator on a Hilbert space
$\mathcal{H}_X$.\smallskip

The \emph{von Neumann entropy} of a quantum state
$\rho \in \mathfrak{S}(\HH)$  (defined by  formula (\ref{ent-def}))  is a concave lower semicontinuous function on the set~$\mathfrak{S}(\HH)$ taking values in~$[0,+\infty]$ \cite{H-SCI,L-2,W}.

We will use the  homogeneous extension of the von Neumann entropy to the positive cone $\T_+(\HH)$ defined as
\begin{equation}\label{S-ext}
S(\rho)\doteq(\Tr\rho)S(\rho/\Tr\rho)=\Tr\eta(\rho)-\eta(\Tr\rho)
\end{equation}
for any nonzero operator $\rho$ in $\T_+(\HH)$ and equal to $0$ at the zero operator \cite{L-2}.\smallskip

The concavity of the von Neumann entropy on $\St(\HH)$ implies that
\begin{equation}\label{S-in}
S(\rho)+S(\sigma)\leq S(\rho+\sigma)\quad \forall \rho,\sigma\in\T_+(\HH).
\end{equation}

We will use the sharp log-Sobolev inequality for the uniform measure on a complete graph obtained
in \cite{LS-1}, Theorem A.1. This inequality states that for a real random variable $f$ taking values $f_1,\dots,f_d$
with equal probabilities $1/d$,
\begin{equation}\label{l-S}
\mathbb{E}\!\left[f^2\ln f^2\right]
-
\mathbb{E}[f^2]\ln\mathbb{E}[f^2]
\leq
K_d\,\mathcal{E}(f,f),
\end{equation}
where \begin{equation}\label{Cdd+}
K_d=
\begin{cases}
2, & d=2,\\[2mm]
\dfrac{d}{d-2}\ln(d-1), & d>2,
\end{cases}
\end{equation}
and
\[
\mathcal{E}(f,f)
=\frac{1}{2d^2}
\sum_{i,j=1}^{d}(f_i-f_j)^2=\frac{1}{2d^2}\left(2d^2-2\left(\sum_{k=1}^{d}f_k\right)^2\right),
\]
as follows from the general definition of the Dirichlet form (2.2) in \cite{LS-1} applied to the case of Theorem A.1.

Thus  for any non-negative random variable $f$ such that $\mathbb{E}[f^2]=1$,
the log-Sobolev inequality (\ref{l-S}) takes the form
\begin{equation}\label{l-S+}
\frac1d\sum_i f_i^2\ln f_i^2
\;\le\;
K_d\Bigl[1 - (\tfrac1d\sum_i f_i)^2\Bigr].
\end{equation}
By setting $\lambda_i=f^2_i/d$, $i=\overline{1,d}$, we obtain from (\ref{l-S+}) the inequality
\begin{equation}\label{ll}
\ln d + \sum_{i=1}^d\lambda_i\ln\lambda_i
\;\le\;
K_d\left[1 - \frac1d\Bigl(\sum_{i=1}^d\sqrt{\lambda_i}\Bigr)^2\right]
\end{equation}
valid for any probability distribution $\{\lambda_i\}_{i=1}^d$  with $d$ outcomes.
We would like to notice here that this inequality arose independently in completely different context,
as a special case of optimality conditions for accessible information of a quantum ensemble
(see \cite{HU25}, Eq. (26)).

\section{Main results}

\subsection{Lipschitz  semicontinuity bound for the von Neumann entropy w.r.t. the fidelity deficit $(1-F(\rho,\sigma))$}

In our recent article \cite{H&Sh-5} we discussed the problem of existence of a supporting
affine functional for the EoF at a state of  finite and
infinite-dimensional composite quantum systems. One of the questions which are
not resolved in \cite{H&Sh-5} concerns the existence of supporting
affine functional at any pure state of a
finite-dimensional composite quantum system. It is not hard to show that this question has a positive solution for a
pure state $\rho$ of a bipartite system $AB$ provided that there is a finite number $C_{\rho} > 0$ such that
\[
  S(\rho) - S(\sigma) \le C_{\rho}\,\beta^{2}(\rho,\sigma),
  \qquad \forall\, \sigma \in \mathfrak{S}(\mathcal{H}_A),
\]
where $\,\beta(\rho,\sigma)\,$ is the Bures distance between  $\rho$ and $\sigma$  \cite{H-SCI, Wilde}.  Unfortunately, we cannot claim  the existence of such number $C_{\rho}$ for
all pure states $\rho$ in $\mathfrak{S}(\mathcal{H}_{AB})$, since one
can show that
\[
  \sup_{\sigma \in \mathfrak{S}(\mathcal{H})}
  \frac{S(\rho) - S(\sigma)}{\beta^{2}(\rho,\sigma)} = +\infty
\]
for some states $\rho$ of a quantum system described by a finite-dimensional Hilbert space
$\mathcal{H}$.\footnote{It is known that the von Neumann entropy is Lipschitz-continuous w.r.t. the Bures
distance \cite{LB}, but it is not Lipschitz-continuous (and is not Lipschitz-lower-semicontinuous) w.r.t. the squared Bures distance.}

Nevertheless, there is a class of states for which the above supremum is finite.

\begin{proposition}\label{main-2} Let $\HH$ be a Hilbert space of any dimension and $\rho$ be a state in $\St(\HH)$
of rank $d$ with the spectral representation
\begin{equation*}
\rho=\frac{1}{d}\sum_{k=1}^{d}|\phi_k\rangle\langle\phi_k|,\quad d\geq 2
\end{equation*}
where $\{\phi_k\}_{k=1}^{d}$ is an orthonormal system in $\HH$. Then
\begin{equation}\label{main+2}
 S(\rho)-S(\sigma)\leq K_d (1-F(\rho,\sigma))\leq K_d \beta^2(\rho,\sigma),
\end{equation}
for any state $\sigma\in \St(\mathcal H)$, where
\begin{equation}\label{Cdd}
K_d=
\begin{cases}
2, & d=2,\\[2mm]
\dfrac{d}{d-2}\ln(d-1), & d>2,
\end{cases}
\end{equation}
is the optimal constant in the log-Sobolev inequality (\ref{l-S}), $\,F(\rho,\sigma)=
\left\|\sqrt{\rho}\sqrt{\sigma}\right\|^2\,$ and  $\beta(\rho,\sigma)$
are the fidelity  and the Bures distance between $\rho$ and $\sigma$, respectively.\smallskip

The first inequality in (\ref{main+2}) is sharp (optimal):
\[
\sup_{\sigma\ne\rho}\shs
\frac{S(\rho)-S(\sigma)}
{1-F(\rho,\sigma)}
=K_d.
\]
If $\,d\geq3\,$ then this supremum is attained at the state
\begin{equation*}
\sigma=\sum_{k=1}^{d}\lambda_k|\phi_k\rangle\langle\phi_k|,
\end{equation*}
where $\,\lambda_1=\tfrac{d-1}d\,$ and $\,\lambda_2=..=\lambda_d=\tfrac1{d(d-1)}.$
\end{proposition}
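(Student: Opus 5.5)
Let $P=\sum_{k=1}^d|\phi_k\rangle\langle\phi_k|$, so that $\rho=P/d$, $\sqrt\rho=P/\sqrt d$, and
\[
F(\rho,\sigma)=\tfrac1d\bigl\|P\sqrt\sigma\bigr\|_1^2=\tfrac1d\bigl(\Tr\sqrt{P\sigma P}\bigr)^2 .
\]
The plan is to reduce the general $\sigma$ to a state supported in $P\HH$ and commuting with $\rho$, and then apply (\ref{ll}).

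\textbf{Step 1 (compression).} Write $\tau=P\sigma P$ with $t=\Tr\tau\le1$ and set $\tau'=(I-P)\sigma(I-P)$. Pinching does not decrease the entropy, so $S(\sigma)\ge S(\tau+\tau')$, where the right side is the entropy of the block-diagonal state $\tau\oplus\tau'$. Since the blocks have orthogonal supports,
\[
S(\tau+\tau')=S(\tau)+S(\tau')+h(t)\ \ge\ S(\tau)+h(t),
\]
where $S$ is the homogeneous extension (\ref{S-ext}) and $h$ is the binary entropy. The fidelity depends only on $\tau$: $F(\rho,\sigma)=\frac1d(\Tr\sqrt\tau)^2$.

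\textbf{Step 2 (diagonalization inside $P\HH$).} Let $\mu_1,\dots,\mu_d$ be the eigenvalues of $\tau$ and put $\lambda_i=\mu_i/t$, a probability distribution. Then $S(\tau)=tH(\lambda)$ and $\Tr\sqrt\tau=\sqrt t\sum_i\sqrt{\lambda_i}$. It therefore suffices to prove
\[
\ln d-tH(\lambda)-h(t)\ \le\ K_d\Bigl[1-\tfrac td\Bigl(\sum_i\sqrt{\lambda_i}\Bigr)^2\Bigr].
\]
Write $G=\frac1d(\sum_i\sqrt{\lambda_i})^2\in[0,1]$. By (\ref{ll}), $\ln d-H(\lambda)\le K_d(1-G)$. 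Using this, the left side is at most
\[
(1-t)\ln d+tK_d(1-G)-h(t),
\]
and we need this to be $\le K_d(1-tG)=K_d(1-t)+tK_d(1-G)$. This reduces to
\[
(1-t)\ln d-h(t)\le K_d(1-t).
\]
That inequality holds because $h(t)\ge0$ and $K_d\ge\ln d$. The bound $K_d\ge\ln d$ is checked directly: $K_2=2>\ln2$, and for $d>2$ one needs $d\ln(d-1)\ge(d-2)\ln d$, which is clear since $\ln(d-1)\ge\frac{d-2}{d}\ln d$ follows from $(d-1)^d\ge d^{d-2}$. So the main algebra is routine. The point I expect to require the most care is Step 1, specifically making the pinching argument rigorous in infinite dimensions when $S(\sigma)$ may be infinite. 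If $S(\sigma)=+\infty$ there is nothing to prove; otherwise the pinching inequality is standard. The second inequality in (\ref{main+2}) is immediate, since $\beta^2=2(1-\sqrt F)\ge1-F$ for $F\in[0,1]$.

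\textbf{Sharpness.} For $\sigma$ diagonal in $\{\phi_k\}$ (so $t=1$), the ratio is exactly the ratio in (\ref{ll}). Hence the supremum equals the sharp log-Sobolev constant, i.e. the supremum over $\lambda\ne$ uniform of
\[
\frac{\ln d-H(\lambda)}{1-G},
\]
and this is $K_d$ by the optimality in \cite{LS-1}. For $d\ge3$ it remains to verify by direct computation that $\lambda=(\frac{d-1}d,\frac1{d(d-1)},\dots)$ gives equality. Here $\ln d-H(\lambda)=\frac{d-2}{d}\ln(d-1)$. Also $\sum_i\sqrt{\lambda_i}=\frac{2(d-1)}{\sqrt{d(d-1)}}$, so $G=\frac{4(d-1)}{d^2}$ and $1-G=\frac{(d-2)^2}{d^2}$. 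The ratio is therefore $\frac{d}{d-2}\ln(d-1)=K_d$. For $d=2$ the supremum is approached but not attained, which is again given by optimality of $K_2$ in \cite{LS-1}.
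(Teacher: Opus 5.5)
Step 1 contains a genuine error: you apply the pinching inequality in the wrong direction. The map $\sigma\mapsto P\sigma P+(I-P)\sigma(I-P)$ is a unital channel, so the pinched state is more mixed than $\sigma$ and $S(\tau+\tau')\ge S(\sigma)$. That is an upper bound on $S(\sigma)$, and it is useless for bounding $S(\rho)-S(\sigma)$ from above. The lower bound $S(\sigma)\ge S(\tau)+h(t)$ that your Step 2 actually relies on is false. Take a unit vector $\phi'\perp\phi_1,\dots,\phi_d$ and $\sigma=|\psi\rangle\langle\psi|$ with $\psi=(\phi_1+\phi')/\sqrt2$. Then $S(\sigma)=0$, while $\tau=\frac12|\phi_1\rangle\langle\phi_1|$ has $S(\tau)=0$ in the sense of (\ref{S-ext}) and $t=\frac12$, so $S(\tau)+h(t)=\ln2$. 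The difficulty is therefore not the infinite-dimensional rigor you anticipated: the inequality already fails when $\dim\HH=d+1$.

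The approach can be repaired, because the $h(t)$ term is never needed. What is true is $S(\sigma)\ge S(\tau)$ for the homogeneous extension. With $X=P\sqrt\sigma$, the operators $\tau=XX^*$ and $X^*X=\sqrt\sigma P\sqrt\sigma$ have the same nonzero eigenvalues, and $\sigma-\sqrt\sigma P\sqrt\sigma=\sqrt\sigma(I-P)\sqrt\sigma\ge0$. So by (\ref{S-in}) and nonnegativity of (\ref{S-ext}), $S(\tau)=S(\sqrt\sigma P\sqrt\sigma)\le S(\sqrt\sigma P\sqrt\sigma)+S(\sqrt\sigma(I-P)\sqrt\sigma)\le S(\sigma)$. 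Using $S(\sigma)\ge tH(\lambda)$ in place of $S(\sigma)\ge tH(\lambda)+h(t)$, your Step 2 goes through unchanged. It reduces to $(1-t)\ln d\le K_d(1-t)$, i.e. to $K_d\ge\ln d$; for $t=0$ it is just $\ln d\le K_d$.

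This repaired argument differs from the paper's. The paper rotates the eigenbasis of $\sigma$ onto an extension of $\{\phi_k\}$ and folds the resulting state into $\supp\rho$ by a channel fixing $\rho$. Monotonicity of fidelity and majorization then give a state supported in $\supp\rho$ and commuting with $\rho$, with no smaller fidelity and no larger entropy, to which (\ref{ll}) applies directly without using $K_d\ge\ln d$. Your compression route avoids the channel and the majorization argument, at the cost of spending the slack $K_d-\ln d\ge0$ on the weight $1-t$ that $\sigma$ places outside $\supp\rho$. Your sharpness discussion and the bound $1-F\le\beta^2$ are fine.
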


\emph{Proof.}  W.l.o.g. we may assume that $\dim\HH=+\infty$. We have to prove the first inequality (\ref{main+2}), because
the second one follows from the definitions of the fidelity  and the Bures distance.

Assume first that $\sigma$ is a state of rank $d$ with the spectrum $\,(\lambda_1,...,\lambda_d,0,0,...)\,$ such that
$\supp\sigma\subseteq\supp\rho$. Then inequality (\ref{ll}) implies directly  that
\begin{equation*}
 S(\rho)-S(\sigma)\leq K_d (1-\|\sqrt{\rho}\sqrt{\sigma}\|_1^2)=K_d(1-F(\rho,\sigma)).
\end{equation*}

Assume now  that $\sigma$ is an arbitrary state in $\St(\HH)$
with the spectral representation
\begin{equation*}
\sigma=\sum_{k=1}^{+\infty}\lambda_k|\psi_k\rangle\langle\psi_k|,\quad \lambda_k\geq\lambda_{k+1}\geq 0\;\; \forall k,
\end{equation*}
where $\{\psi_k\}_{k=1}^{+\infty}$ is an orthonormal system in $\HH$. Consider the state
\begin{equation*}
\hat{\sigma}=\sum_{k=1}^{+\infty}\lambda_k|\phi_k\rangle\langle\phi_k|,
\end{equation*}
where  $\{\phi_k\}_{k=1}^{+\infty}$ is an orthonormal system obtained by a particular  extension of the system $\{\phi_k\}_{k=1}^{d}$. It is clear that
\begin{equation}\label{1s}
S(\sigma)=S(\hat{\sigma})\quad \textrm{and}\quad F(\rho,\hat{\sigma})=\frac{1}{d}\shs[\Tr\sqrt{\sigma}]^2\geq\frac{1}{d}\shs[\Tr P_{\rho}\sqrt{\sigma}]^2=F(\rho,\sigma),
\end{equation}
where $P_{\rho}=d\rho$ is a projector onto the subspace $\supp\rho$. \smallskip

Consider the quantum channel
$$
\Phi(\varrho)=\sum_{k=0}^{+\infty}W_k\varrho W_k^*,
$$
where $W_k=\sum_{i=1}^{d}|\phi_i\rangle\langle\phi_{i+k}|$ is a partial isometry such that
$W_kW^*_k$ and $W^*_kW_k$ are the projectors on the linear spans of $\{\phi_1,...,\phi_d\}$ and $\{\phi_{1+k},...,\phi_{d+k}\}$
correspondingly.

Let $\tilde{\sigma}=\Phi(\hat{\sigma})$. Since $\rho=\Phi(\rho)$ by the construction, we have
\begin{equation}\label{2s}
F(\rho,\tilde{\sigma})=F(\Phi(\rho),\Phi(\hat{\sigma}))\geq F(\rho,\hat{\sigma})
\end{equation}
due to monotonicity of the fidelity under action of a channel. Thus, as $\supp\tilde{\sigma}\subseteq\supp\rho$,
to prove the
first inequality in (\ref{main+2}) by combining (\ref{1s}), (\ref{2s}) and the first part of this proof it suffices to show that $S(\tilde{\sigma})\leq S(\sigma)$. This can be done by using the Schur concavity of the von Neumann entropy and by noting that
$$
\tilde{\sigma}=\sum_{k=0}^{+\infty}\lambda_{1+k}|\phi_1\rangle\langle\phi_1|+...+\sum_{k=0}^{+\infty}\lambda_{d+k}|\phi_d\rangle\langle\phi_d|,
$$
and hence the state $\sigma$ is majorized by the state $\tilde{\sigma}$ in the sense of \cite[Section 13.5]{S-T-1}.
\smallskip

The optimality claim for $\,d\geq 3\,$ is proved by direct calculation:
$$
\ln d-S(\sigma)=\frac{d-2}{d}\ln(d-1)= K_d\left(1-\frac{1}{d}\shs[\Tr\sqrt{\sigma}]^2\right).
$$
For $d=2$ we cannot find "optimal" state $\sigma$. So, consider the family of states
\[
\sigma_\varepsilon=
\begin{pmatrix}
\frac12+\varepsilon&0\\
0&\frac12-\varepsilon
\end{pmatrix},
\qquad
\varepsilon\in(0,\textstyle\frac{1}{2}).
\]
Then
\[
S(\sigma_\varepsilon)
=
-\left(\frac12+\varepsilon\right)
 \ln\left(\frac12+\varepsilon\right)
-\left(\frac12-\varepsilon\right)
 \ln\left(\frac12-\varepsilon\right).
\]
A Taylor expansion at $\varepsilon=0$ gives
$\, \ln2-S(\sigma_\varepsilon)
=
2\varepsilon^2+O(\varepsilon^4)$. On the other hand,
\[
F(\rho,\sigma_\varepsilon)
=
\left(
\sqrt{\frac12\left(\frac12+\varepsilon\right)}
+
\sqrt{\frac12\left(\frac12-\varepsilon\right)}
\right)^2
=
\frac12+\sqrt{\frac14-\varepsilon^2}.
\]
Hence $\,1-F(\rho,\sigma_\varepsilon)
=\varepsilon^2+O(\varepsilon^4)$. Therefore
\[
\lim_{\varepsilon\to0^+}\frac{S(\rho)-S(\sigma_\varepsilon)}
{1-F(\rho,\sigma_\varepsilon)}
=2.
\]

$\Box$ \smallskip

\subsection{Lipschitz  semicontinuity bound for the EoF}

The main result of this note is the following

\begin{proposition}\label{main}
Let $A$ and $B$ be quantum systems of any dimensions. Let $\rho$
be a pure state of the system $AB$ with the Schmidt representation
\begin{equation}\label{smr}
\rho
=
\frac{1}{d}
\sum_{i,j=1}^{d}
|\varphi_i\rangle\langle\varphi_j|
\otimes
|\psi_i\rangle\langle\psi_j|,\quad d\geq 2,
\end{equation}
where $\{\varphi_1,\ldots,\varphi_d\}$ and
$\{\psi_1,\ldots,\psi_d\}$ are orthonormal systems of vectors in
$\mathcal H_A$ and $\mathcal H_B$, respectively. Then
\begin{equation}\label{main+}
E_F(\rho)-E_F(\sigma)
\leq
K_d\bigl(1-\operatorname{Tr}\rho\sigma\bigr)
\leq
K_d\frac{\|\rho-\sigma\|_1}{2},
\end{equation}
for any state $\sigma\in \St(\mathcal H_{AB})$, where
\begin{equation}\label{Cdd++}
K_d=
\begin{cases}
2, & d=2,\\[2mm]
\dfrac{d}{d-2}\ln(d-1), & d>2,
\end{cases}
\end{equation}
is the optimal constant in the log-Sobolev inequality (\ref{l-S}).\smallskip

The first inequality in (\ref{main+}) is sharp (optimal):
\[
\sup_{\sigma\ne\rho}\shs
\frac{E_F(\rho)-E_F(\sigma)}
{1-\Tr\rho\sigma}
=K_d.
\]
If $\,d\geq3\,$ then this supremum is attained at the pure state
\begin{equation}\label{smr+}
\sigma
=
\sum_{i,j=1}^{d}\sqrt{\lambda_i\lambda_j}\;
|\varphi_i\rangle\langle\varphi_j|
\otimes
|\psi_i\rangle\langle\psi_j|,
\end{equation}
where $\,\lambda_1=\tfrac{d-1}d\,$ and $\,\lambda_2=..=\lambda_d=\tfrac1{d(d-1)}$.
\end{proposition}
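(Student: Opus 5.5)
We reduce the claim to pure states and then apply Proposition \ref{main-2} to the reduced states. The second inequality in (\ref{main+}) is standard: for a pure state $\rho$ we have $1-\Tr\rho\sigma\le\frac12\|\rho-\sigma\|_1$, since $\Tr\rho\sigma=F(\rho,\sigma)$ and $1-F\le\frac12\|\rho-\sigma\|_1$ by the Fuchs–van de Graaf inequality. So the work is the first inequality. Note that $E_F(\rho)=S(\rho_A)=\ln d$.

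\textbf{Step 1: pure $\sigma$.} Let $\sigma=|\omega\rangle\langle\omega|$ be pure. We compare $1-|\langle\Omega|\omega\rangle|^2$, where $\Omega=d^{-1/2}\sum_i\varphi_i\otimes\psi_i$, with $1-F(\rho_A,\sigma_A)$. By Uhlmann's theorem, $F(\rho_A,\sigma_A)=\max|\langle\Omega|(I_A\otimes U)\omega\rangle|^2$ over unitaries $U$ on $\HH_B$ (partial isometries in infinite dimensions, extended suitably). Taking $U=I$ gives $|\langle\Omega|\omega\rangle|^2\le F(\rho_A,\sigma_A)$. Combining this with Proposition \ref{main-2} applied to $\rho_A$, which has rank $d$ and uniform spectrum, we get
\[
E_F(\rho)-E_F(\sigma)=S(\rho_A)-S(\sigma_A)\le K_d\bigl(1-F(\rho_A,\sigma_A)\bigr)\le K_d\bigl(1-\Tr\rho\sigma\bigr).
\]

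\textbf{Step 2: mixed $\sigma$.} The key observation is that $\sigma\mapsto \ln d-K_d(1-\Tr\rho\sigma)$ is affine in $\sigma$ and lies below $S(\varsigma_A)$ on all pure states $\varsigma$ by Step 1. Given any Borel probability measure $\mu$ on pure states with barycenter $\sigma$, integration yields
\[
\int S(\varrho_A)\,\mu(d\varrho)\ \ge\ \ln d-K_d\Bigl(1-\int\Tr\rho\varrho\,\mu(d\varrho)\Bigr)=\ln d-K_d(1-\Tr\rho\sigma).
\]
Here linearity of $\varrho\mapsto\Tr\rho\varrho$ and its continuity and boundedness justify passing the barycenter through the integral. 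Taking the infimum in (\ref{EF-c}) gives the inequality. In other words, $\Lambda_\rho=(\ln d-K_d)I+K_d\rho$ is a supporting affine functional, consistent with the criterion from \cite{H&Sh-5} quoted in the Introduction.

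\textbf{Step 3: sharpness.} For $d\ge3$, take $\sigma$ from (\ref{smr+}). It is pure with $\sigma_A=\sum_i\lambda_i|\varphi_i\rangle\langle\varphi_i|$ and
\[
\Tr\rho\sigma=\frac1d\Bigl(\sum_i\sqrt{\lambda_i}\Bigr)^2=F(\rho_A,\sigma_A).
\]
The equality computed in the proof of Proposition \ref{main-2} then transfers verbatim. For $d=2$, use the pure states with Schmidt coefficients $\frac12\pm\varepsilon$ in the same bases; the same Taylor computation gives ratio tending to $2$. Combined with the upper bound, this shows the supremum equals $K_d$.

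\textbf{Main obstacle.} The main point requiring care is Step 2 in infinite dimensions. One must check that the generalized-ensemble definition (\ref{EF-c}) really allows passing the affine lower bound through the integral. This holds because the bound is a bounded continuous affine function, so it is fine. Step 1 only needs $|\langle\Omega|\omega\rangle|^2\le F(\rho_A,\sigma_A)$, which also follows from monotonicity of fidelity under the partial trace, avoiding Uhlmann technicalities.
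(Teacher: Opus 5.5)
Your proof is correct, but it takes a different route from the paper's. The paper proves the pure-state bound only for $\sigma$ supported in $\HH_A^\rho\otimes\HH_B^\rho$. There $F(\rho_A,\sigma_A)=\frac1d\bigl(\sum_i\sqrt{\lambda_i}\bigr)^2$, so (\ref{ll}) applies directly. It then handles mixed states on this finite-dimensional face via an optimal decomposition, obtaining the supporting operator $\Lambda^0_\rho=K_d\rho-aI_\rho$ on the face. Finally it extends this operator to all of $\St(\HH_{AB})$ by Lemma \ref{rl} with $c=a$; the key input there is $S([Q_\rho\psi]_A)\le S(\psi_A)$, with $Q_\rho=P_\rho^A\otimes P_\rho^B$, for the homogeneous extension of the entropy.

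You instead apply Proposition \ref{main-2} to $\rho_A$ and an arbitrary $\sigma_A$. That proposition already contains the reduction to the support, via majorization and the folding channel $\Phi$. Combined with monotonicity of fidelity under the partial trace, it gives the pure-state bound for every pure $\sigma$ at once. Integrating the bounded affine minorant against an arbitrary generalized ensemble then finishes the proof, with no extension lemma and no need for optimal measures.

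Your route is shorter and handles infinite dimensions directly. The paper's route buys a general extension lemma with a free parameter $c$. The choice $c=0$ gives the sharper supporting operator $\widetilde{\Lambda}_\rho=K_d\rho-aP_\rho$, and hence Proposition \ref{main++} with the extra term $-a\Tr P^\bot_\rho\sigma$. Your chain does not give this directly, because passing through $F(\rho_A,\sigma_A)$ and Proposition \ref{main-2} discards the weight of $\sigma$ outside $\supp\rho_A\otimes\supp\rho_B$.

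Two small corrections:
\begin{enumerate}
\item The general Fuchs--van de Graaf inequality only gives $1-\sqrt{F}\le\frac12\|\rho-\sigma\|_1$. The bound $1-F\le\frac12\|\rho-\sigma\|_1$ can fail for two mixed states, e.g.\ commuting states with spectra $(\frac12,\frac12,0)$ and $(0,\frac12,\frac12)$, where $1-F=\frac34>\frac12$. What you need is the pure-state case. It follows at once from $1-\Tr\rho\sigma=\Tr\rho(\rho-\sigma)$ and $0\le\rho\le I_{AB}$, which is the paper's spectral-diameter argument.
\item For $d=2$, the numbers $\frac12\pm\varepsilon$ are the squared Schmidt coefficients (the eigenvalues of $\sigma_A$), not the coefficients themselves.
\end{enumerate}
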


\begin{remark}\label{main-r}
By Lemma \ref{sl} below $a\doteq K_d-\ln d>0$ for any natural $d\geq 2$. The first inequality in (\ref{main+}) shows that the functional
$\,\ell_\rho(\sigma)=\Tr\Lambda_{\rho}\sigma\,$
on $\St(\HH_{AB})$, where
$$\Lambda_{\rho}\doteq K_d\rho-aI_{AB}=(\ln d)\rho\ominus a(I_{AB}-\rho),$$
is a supporting functional for the EoF at the state $\rho$, which means that
\begin{equation}\label{gsf}
\ell_\rho(\rho)= E_F(\rho)=\ln d\quad \textrm{and} \quad \ell_\rho(\sigma)\leq E_F(\sigma)\quad \forall\sigma\in\St(\HH_{AB}).
\end{equation}
The spectrum of the  above operator $\Lambda_{\rho}$ is discrete, the minimal and the maximal eigenvalues of $\Lambda_{\rho}$ are equal to $\,-a\,$  and $\,\ln d\,$, respectively. Hence, the spectral diameter of operator $\Lambda_{\rho}$  is equal to $K_d$.

This should be compared with the fact that Theorem 1 in \cite{H&Sh-5} and the second inequality in (\ref{main+}) (easily derived from the first one)  imply the existence of a supporting functional $\,\ell_\rho(\sigma)=\Tr\Lambda_{\rho}\sigma\,$  for the EoF at the state $\rho$ with the Hermitian operator $\Lambda_{\rho}$ having the  spectral diameter $K_d$. This leads us to the open question formulated at the end of the article.

Note, finally, that the relations in (\ref{gsf}) for the functional $\ell_\rho$ directly implies the first inequality in (\ref{main+}).
\end{remark}

By noting that $\,\Tr \rho\shs\sigma=\left\|\sqrt{\rho}\sqrt{\sigma}\right\|^2=F(\rho,\sigma)\,$ for any pure state $\rho$
and arbitrary state $\sigma$ we may reformulate Proposition \ref{main} as follows.

\begin{corollary}\label{main-c}
Let $A$ and $B$ be quantum systems of any dimensions. Let $\rho$ be a pure state of the system $AB$ with the Schmidt representation
(\ref{smr}). Then
\begin{equation}\label{main-c+}
E_F(\rho)-E_F(\sigma)
\leq K_d\bigl(1-F(\rho,\sigma)\bigr)
\leq K_d\,\beta^2(\rho,\sigma),
\end{equation}
where  $K_d$ is defined in (\ref{Cdd}), $\,F(\rho,\sigma)=
\left\|\sqrt{\rho}\sqrt{\sigma}\right\|^2\,$ and  $\,\beta(\rho,\sigma)$
are the fidelity  and the Bures distance between $\rho$ and $\sigma$, respectively.

The first inequality in (\ref{main-c+}) is sharp (optimal):
\[
\sup_{\sigma\ne\rho}\shs
\frac{E_F(\rho)-E_F(\sigma)}
{1-F(\rho,\sigma)}
=K_d.
\]
If $\,d\geq3\,$ then this supremum is attained at the pure state $\sigma$ defined in (\ref{smr+}).
\end{corollary}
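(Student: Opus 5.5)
The corollary follows from Proposition \ref{main}; what it needs is the identity $\Tr\rho\sigma=F(\rho,\sigma)$ for pure $\rho$, together with the relation between fidelity and Bures distance. I will not reprove Proposition \ref{main}. I will only translate its conclusions.

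\textbf{Step 1: the fidelity identity.} Write $\rho=|\Omega\rangle\langle\Omega|$, where $\Omega=d^{-1/2}\sum_i\varphi_i\otimes\psi_i$. Then $\sqrt{\rho}=\rho$. Hence $\sqrt{\rho}\sqrt{\sigma}=|\Omega\rangle\langle\Omega|\sqrt{\sigma}$ is a rank-one operator, and its trace norm is
\[
\|\Omega\|\,\|\sqrt{\sigma}\,\Omega\|=\langle\Omega|\sigma|\Omega\rangle^{1/2}.
\]
Squaring gives $F(\rho,\sigma)=\langle\Omega|\sigma|\Omega\rangle=\Tr\rho\sigma$. This holds in any dimension, since only the trace-class operator $\sigma$ and a unit vector are involved.

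\textbf{Step 2: the two inequalities.} Substituting Step 1 into the first inequality of (\ref{main+}) gives the first inequality of (\ref{main-c+}) verbatim. For the second, I recall the definition $\beta^2(\rho,\sigma)=2\bigl(1-\sqrt{F(\rho,\sigma)}\bigr)$. Put $x=\sqrt{F}\in[0,1]$. Then $1-x^2=(1-x)(1+x)\le 2(1-x)$, which gives $1-F\le\beta^2$. This is the same elementary step used for the second inequality in Proposition \ref{main-2}.

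\textbf{Step 3: sharpness.} By Step 1 the quotients $\bigl(E_F(\rho)-E_F(\sigma)\bigr)/\bigl(1-F(\rho,\sigma)\bigr)$ and $\bigl(E_F(\rho)-E_F(\sigma)\bigr)/\bigl(1-\Tr\rho\sigma\bigr)$ coincide for every $\sigma\ne\rho$. Here I use that $F(\rho,\sigma)<1$ whenever $\sigma\ne\rho$ and $\rho$ is pure, so the denominators never vanish. Therefore the supremum equals $K_d$ by the sharpness part of Proposition \ref{main}. When $d\ge3$ it is attained at the same state (\ref{smr+}).

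There is no real obstacle here. The only point needing care is the nonvanishing of $1-F$ for $\sigma\neq\rho$. This follows because $\langle\Omega|\sigma|\Omega\rangle=1$ for a unit-trace positive $\sigma$ forces $\sigma=|\Omega\rangle\langle\Omega|$.
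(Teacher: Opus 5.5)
Your proposal is correct and follows the paper's own route: the paper obtains the corollary by noting that $\Tr\rho\sigma=\|\sqrt{\rho}\sqrt{\sigma}\|^2=F(\rho,\sigma)$ for pure $\rho$, rewriting Proposition~\ref{main}, and taking $1-F\le\beta^2$ from the definition of the Bures distance. Your rank-one trace-norm computation and your check that $1-F(\rho,\sigma)>0$ for $\sigma\neq\rho$ simply make explicit what the paper leaves implicit.
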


\emph{Proof of Proposition \ref{main}.}  Assume first that $\sigma$ is a pure state with
the Schmidt
numbers $\,\left(\lambda_1,\ldots,\lambda_d,0,0,...\right)\,$
such that $\,\supp \sigma_A\subseteq \HH_A^\rho\doteq\supp \rho_A$ and $\,\supp \sigma_B\subseteq \HH_B^\rho\doteq\supp \rho_B$.

Since
$$
\frac1d\Bigl(\sum_{i=1}^d\sqrt{\lambda_i}\Bigr)^2=\left[\Tr\sqrt{\sqrt{\sigma_A}\rho_A\sqrt{\sigma_A}}\right]^2=F(\rho_A,\sigma_A)\geq F(\rho,\sigma),
$$
inequality (\ref{ll}) implies
\begin{equation}\label{lll}
\ln d - S(\sigma_A)
\;\le\;
K_d\;[1 - F(\rho,\sigma)]\;=\;
K_d\;[1 - \Tr\rho\shs\sigma].
\end{equation}

Let $\sigma$ be an arbitrary mixed state supported by the subspace $\,\HH_A^\rho\otimes\HH_B^\rho$.
Let $\nu$ be an optimal measure for the state $\sigma$, i.e. a  probability measure  $\nu$ on $\,\ext\St(\HH_A^\rho\otimes\HH_B^\rho)\,$ such that
\[
\sigma=\int_{\ext\St(\HH_A^\rho\otimes\HH_B^\rho)} \psi\,\nu(d\psi)
\qquad \textrm{and}\qquad E_F(\sigma)
=
\int_{\ext\St(\HH_A^\rho\otimes\HH_B^\rho)} S(\psi_A)\,\nu(d\psi).
\]
Then it follows from (\ref{lll}) that
\[
\begin{aligned}
\ln d-E_F(\sigma)
&=
\int_{\ext\St(\HH_A^\rho\otimes\HH_B^\rho)}
\bigl[\ln d - S(\psi_A)\bigr]\,\nu(d\psi)
\\
&\leq
K_d
\int_{\ext\St(\HH_A^\rho\otimes\HH_B^\rho)}
\bigl[1-\operatorname{Tr}(\rho\psi)\bigr]\,\nu(d\psi)=K_d\bigl[1-\operatorname{Tr}(\rho\shs\sigma)\bigr].
\end{aligned}
\]

Let $\,\Lambda^0_\rho\doteq K_d\rho-(K_d-\ln d)I_\rho\,$ be an operator on $\,\HH_A^\rho\otimes\HH_B^\rho$, where $I_\rho$
denotes the unit operator on $\,\HH_A^\rho\otimes\HH_B^\rho$, then the last inequality shows that the functional
\[
\ell^0_{\rho}(\varrho)=\Tr\Lambda^0_\rho\varrho,\quad  \varrho\in\St(\HH_A^\rho\otimes\HH_B^\rho),
\]
is a supporting functional for the restriction of the EoF to the set  $\,\St(\HH_A^\rho\otimes\HH_B^\rho)\,$ (-- the face of $\,\St(\HH_A\otimes\HH_B)!$) at the state $\rho$, i.e.
\begin{equation}\label{lsf}
\ell^0_{\rho}(\rho)=E_F(\rho)=\ln d\quad \textrm{and} \quad \ell^0_{\rho}(\sigma)\leq E_F(\sigma)\quad \forall\sigma\in\St(\HH_A^\rho\otimes\HH_B^\rho).
\end{equation}
Using Lemma \ref{rl} and Remark \ref{rlr} below it is easy to show that the functional $\,\ell_\rho(\sigma)=\Tr\Lambda_{\rho}\sigma,$ where
\begin{equation}\label{old}
\Lambda_{\rho}\doteq K_d\rho-aI_{AB},\quad  a=K_d-\ln d>0,
\end{equation}
is a (global) supporting functional for the EoF  at the state $\rho$, which means the validity of the relations in (\ref{gsf}). The second of these relations implies the first inequality in (\ref{main+}).

The second inequality in (\ref{main+}) can be derived from the first one by using the well known continuity bound for the function
$\,\vartheta\mapsto\Tr\rho\vartheta\,$ (see, f.i., Example 2 in \cite{QC}), since the spectral diameter of operator $\rho$ is equal to $1$.\smallskip

The optimality claim for $\,d\geq 3\,$ is proved by direct calculation:
$$
\ln d-S(\sigma_A)=\frac{d-2}{d}\ln(d-1)= K_d\left(1-\Tr\rho\sigma\right).
$$
For $d=2$ we cannot find "optimal" state $\sigma$. So, consider the pure states with Schmidt coefficients
\[
\sqrt{\frac12+\varepsilon},
\quad
\sqrt{\frac12-\varepsilon},\qquad  \varepsilon\in(0,\textstyle\frac{1}{2}).
\]
For these states, $\,E_F(\sigma_\varepsilon)
=h\!\left(\frac12+\varepsilon\right)$, where $h(x)=-x\ln x-(1-x)\ln(1-x)$ is the binary entropy. Therefore $\,E_F(\rho)-E_F(\sigma_\varepsilon)
=2\varepsilon^2+O(\varepsilon^4)\,$.

Since both states are pure,
$\,\operatorname{Tr}(\rho\sigma_\varepsilon)=F(\rho,\sigma_\varepsilon)
=1-\varepsilon^2+O(\varepsilon^4)$. Consequently,
\[
\lim_{\varepsilon\to0^+}\frac{
E_F(\rho)-E_F(\sigma_\varepsilon)
}{
1-\operatorname{Tr}(\rho\sigma_\varepsilon)}=2.
\]
$\Box$\smallskip

The proof of Proposition \ref{main} is based on the following lemma, in which we write $D(\Lambda)$ for the diameter of the spectrum $\mathrm{Sp}(\Lambda)$ of an operator $\Lambda\in\B_{\rm sa}(\HH)$:
\begin{equation}\label{D-def}
D(\Lambda)=\max \mathrm{Sp}(\Lambda)-\min \mathrm{Sp}(\Lambda).
\end{equation}

\begin{lemma}\label{rl} Let $A$ and $B$ be quantum systems of any dimensions. Let $\rho$ be a state of the system $AB$, $\,\HH^\rho_X\doteq\supp \rho_X\,$ and $\,P_\rho^X\in\Bsa(\HH_X)$ be the
projector onto $\HH_X^\rho$, $X=A,B$.

If there  exists an operator
$\,\Lambda_\rho\in \B_{\rm sa}(\HH_A^\rho\otimes\HH_B^\rho )\,$ such that
\begin{equation}\label{con}
E_F(\rho)=\Tr\Lambda_\rho\rho
\quad\; \textit{and}\quad\;
\langle\psi|\Lambda_\rho|\psi\rangle\leq S(\psi_A)
\quad \forall\,\psi\in \HH^A_\rho\otimes\HH^B_\rho,\; \|\psi\|=1,
\end{equation}
and $\,c\geq 0\,$ is a given number, then
\begin{equation}\label{ent}
E_F(\rho)=\Tr\Lambda^{\rm ext}_{\rho,c}\rho
\quad\; \textit{and}\quad\;
\langle\psi|\Lambda^{\rm ext}_{\rho,c}|\psi\rangle\leq S(\psi_A)
\quad \forall\,\psi\in \HH_{AB},\, \|\psi\|=1,
\end{equation}
where
$$
\Lambda^{\rm ext}_{\rho,c}=P_\rho^A\otimes P_\rho^B\cdot\Lambda_\rho\cdot P_\rho^A\otimes P_\rho^B-c\left(I_{AB}-P_\rho^A\otimes P_\rho^B\right)
$$
is a Hermitian operator on $ \HH_{AB}$ such that\footnote{$\rm Sp(\Lambda_\rho)$ and $D(\Lambda_\rho) $ are the spectrum of $\Lambda_\rho$ and its diameter (defined in (\ref{D-def})).}
$$
D(\Lambda^{\rm ext}_{\rho,c})=D(\Lambda_\rho)+\max\{0, c+\inf\rm Sp(\Lambda_\rho)\}.
$$
\end{lemma}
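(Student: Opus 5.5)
The plan is to verify the two relations in (\ref{ent}) directly: the first from the support of $\rho$, the second by projecting an arbitrary unit vector onto $\HH_A^\rho\otimes\HH_B^\rho$ and applying the superadditivity (\ref{S-in}) of the homogeneous entropy (\ref{S-ext}). The spectral diameter formula will then be read off from the block structure of $\Lambda^{\rm ext}_{\rho,c}$. Throughout I write $P=P_\rho^A\otimes P_\rho^B$.

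\emph{First relation in (\ref{ent}).} Since $\supp\rho\subseteq\HH_A^\rho\otimes\HH_B^\rho$, we have $P\rho P=\rho$ and $(I_{AB}-P)\rho=0$. Hence $\Tr\Lambda^{\rm ext}_{\rho,c}\rho=\Tr\Lambda_\rho\rho=E_F(\rho)$ by (\ref{con}).

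\emph{Second relation in (\ref{ent}).} Take a unit vector $\psi\in\HH_{AB}$ and set $\varphi=P\psi$. Then
$$\langle\psi|\Lambda^{\rm ext}_{\rho,c}|\psi\rangle=\langle\varphi|\Lambda_\rho|\varphi\rangle-c\|(I_{AB}-P)\psi\|^2\le\langle\varphi|\Lambda_\rho|\varphi\rangle.$$
If $\varphi=0$, the left-hand side equals $-c\le0\le S(\psi_A)$ and we are done. Otherwise, apply (\ref{con}) to $\varphi/\|\varphi\|$ and multiply by $\|\varphi\|^2$. Because the extension (\ref{S-ext}) is homogeneous, this gives $\langle\varphi|\Lambda_\rho|\varphi\rangle\le S(\varphi_A)$, where $\varphi_A=\Tr_B|\varphi\rangle\langle\varphi|$ and $S$ is the homogeneous entropy.

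It remains to show the key step $S(\varphi_A)\le S(\psi_A)$, which I expect to be the main point. I will project one side at a time.

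\begin{itemize}
\item[(i)] Let $\psi'=(I_A\otimes P_\rho^B)\psi$. Then
$$\psi_A=\Tr_B\bigl[(I_A\otimes P_\rho^B)|\psi\rangle\langle\psi|\bigr]+\Tr_B\bigl[(I_A\otimes(I_B-P_\rho^B))|\psi\rangle\langle\psi|\bigr]=\psi'_A+R$$
with $R\ge0$. By (\ref{S-in}) and nonnegativity of $S$, $S(\psi'_A)\le S(\psi_A)$.
\item[(ii)] For the (unnormalized) pure vector $\psi'$, the operators $\psi'_A$ and $\psi'_B$ have the same nonzero spectrum, so $S(\psi'_A)=S(\psi'_B)$ by (\ref{S-ext}).
\item[(iii)] Now $\varphi=(P_\rho^A\otimes I_B)\psi'$. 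The same decomposition argument on the $B$ side gives $S(\varphi_B)\le S(\psi'_B)$.
\item[(iv)] Again using purity, $S(\varphi_A)=S(\varphi_B)$.
\end{itemize}

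Chaining these, $S(\varphi_A)=S(\varphi_B)\le S(\psi'_B)=S(\psi'_A)\le S(\psi_A)$, which proves the second relation in (\ref{ent}).

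\emph{Spectral diameter.} The operator $\Lambda^{\rm ext}_{\rho,c}$ is block diagonal with respect to $\HH_{AB}=\mathrm{ran}P\oplus\mathrm{ran}(I_{AB}-P)$, with blocks $\Lambda_\rho$ and $-cI$. Assuming $P\ne I_{AB}$, which is implicit in the formula, its spectrum is therefore $\mathrm{Sp}(\Lambda_\rho)\cup\{-c\}$.

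The maximum is unchanged. Indeed, $\sup\mathrm{Sp}(\Lambda_\rho)\ge\Tr\Lambda_\rho\rho=E_F(\rho)\ge0\ge-c$. The minimum becomes $\min\{\inf\mathrm{Sp}(\Lambda_\rho),-c\}$. Consequently the diameter grows exactly by $\max\{0,c+\inf\mathrm{Sp}(\Lambda_\rho)\}$, which is the stated formula for $D(\Lambda^{\rm ext}_{\rho,c})$.
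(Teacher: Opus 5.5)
Your proof is correct, and its outer structure is the paper's. You pass from $\psi$ to $Q_\rho\psi$ with $Q_\rho=P_\rho^A\otimes P_\rho^B$, drop the $-c$ term, and use (\ref{con}) together with homogeneity of (\ref{S-ext}). This reduces everything to $S([Q_\rho\psi]_A)\le S(\psi_A)$.

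Where you differ is in this key step. The paper stays entirely on the $A$ side. It observes $[Q_\rho\psi]_A=P_\rho^A\psi'_AP_\rho^A\le P_\rho^A\psi_AP_\rho^A$ and gets $S([Q_\rho\psi]_A)\le S(P_\rho^A\psi_AP_\rho^A)$ from (\ref{S-in}). It then invokes an external result (Lemma 3 of its reference [L-2]): compression by a projector does not increase the homogeneous entropy. You instead use only (\ref{S-in}) and the fact that the two marginals of a subnormalized pure vector have the same nonzero spectrum. Each local projector is absorbed by superadditivity on the marginal of the opposite party, and you change sides via $S(\cdot_A)=S(\cdot_B)$.

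Since $\varphi_A=P_\rho^A\psi'_AP_\rho^A$, your steps (ii)--(iv) are in effect a purification proof of that compression inequality. The two ingredients also appear in the opposite order from the paper's. Your argument is self-contained and symmetric in $A$ and $B$, at the cost of a few extra lines; the paper's version is shorter but relies on the citation.

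You also supply something the paper's proof omits: a verification of the spectral-diameter formula. You correctly note that it tacitly assumes $P_\rho^A\otimes P_\rho^B\ne I_{AB}$. Otherwise $\Lambda^{\rm ext}_{\rho,c}=\Lambda_\rho$, and the extra term $\max\{0,c+\inf\mathrm{Sp}(\Lambda_\rho)\}$ need not vanish.
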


\begin{remark}\label{rlr}
It is easy to see that the second property in (\ref{con})  (resp. in (\ref{ent})) holds if and only if  $\,\Tr\Lambda_{\rho}\sigma\leq E_F(\sigma)\,$ for all $\sigma$ in $\St(\HH^A_\rho\otimes\HH^B_\rho)$  (resp. $\,\Tr\Lambda^{\rm ext}_{\rho,c}\sigma\leq E_F(\sigma)\,$ for all $\sigma$ in $\St(\HH_{AB})$).
\end{remark}

\emph{Proof.} The first equality in (\ref{ent})  is obvious. To
show that $\,\langle\psi|\Lambda_{\rho,c}^{\ext}|\psi\rangle\leq S(\psi_A)\,$ for all $\,\psi\in \HH^1_{AB}$
denote $P_\rho^A\otimes P_\rho^B$ by $Q_\rho$ and note that
\begin{align*}
[Q_\rho\psi]_A
&=
\Tr_B
\bigl(
P_\rho^A\otimes P_\rho^B
|\psi\rangle\langle\psi|
P_\rho^A\otimes P_\rho^B
\bigr)
\\
&=
P_\rho^A
\Bigl[
\Tr_B
\bigl(
I_A\otimes P_\rho^B
|\psi\rangle\langle\psi|
\bigr)
\Bigr]
P_\rho^A
\\
&\leq
P_\rho^A
\bigl[
\Tr_B|\psi\rangle\langle\psi|
\bigr]
P_\rho^A,
\end{align*}
where $I_A$ is the unit operator on $\HH_A$. Hence inequality (\ref{S-in}) and Lemma 3 in \cite{L-2} show that
\[
S\bigl([Q_\rho\psi]_A\bigr)
\leq S\bigl(P_\rho^A
\bigl[
\Tr_B|\psi\rangle\langle\psi|
\bigr]
P_\rho^A\bigr)
\leq S(\psi_A)
\qquad \forall\,\psi\in \HH^1_{AB},
\]
where $S$ in the l.h.s. is the extension
of the von Neumann entropy defined in (\ref{S-ext}). Thus, we have
\[
\langle\psi|\Lambda_{\rho,c}^{\ext}|\psi\rangle
=\langle Q_\rho\psi|\Lambda_\rho|Q_\rho\psi\rangle
-c\langle \psi|I_{AB}-Q_\rho|\psi\rangle \leq S\bigl([Q_\rho\psi]_A\bigr)
\leq S(\psi_A),\qquad \forall\,\psi\in \HH^1_{AB},
\]
where the first inequality follows from the second property in (\ref{con}). $\Box$
\medskip

The following lemma is proved using simple arguments and the well-known estimates for the logarithmic function.

\begin{lemma}\label{sl}  For any natural $\,d\geq 2\,$ the optimal constant $K_d$ in the log-Sobolev inequality (\ref{l-S})  defined in (\ref{Cdd}) is greater than $\,\ln d$.
\end{lemma}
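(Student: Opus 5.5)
The claim is that $K_d>\ln d$ for every natural $d\ge 2$. The plan is to handle $d=2$ and $d=3$ directly and then treat $d\ge 4$ with one elementary logarithm estimate.

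For $d=2$ we have $K_2=2>\ln 2$. For $d=3$ the formula gives $K_3=3\ln 2=\ln 8>\ln 3$.

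For $d\ge 3$, put $n=d-1\ge 2$. Then $K_d=\frac{n+1}{n-1}\ln n$, and the inequality $K_d>\ln d$ becomes
$$
(n+1)\ln n>(n-1)\ln(n+1),
$$
or equivalently
$$
(n-1)\bigl[\ln(n+1)-\ln n\bigr]<2\ln n.
$$
This form is convenient because the left-hand side is small. Using $\ln(1+x)<x$ with $x=1/n$ gives
$$
(n-1)\ln\Bigl(1+\frac1n\Bigr)<\frac{n-1}{n}<1 .
$$
On the other side, $2\ln n\ge 2\ln 2>1$ for every $n\ge 2$. Combining the two bounds gives $K_d>\ln d$ for all $d\ge 3$, and this case already covers $d=3$ independently of the direct check.

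An alternative route would be to note that $\frac{d}{d-2}>1$ and compare $\ln(d-1)$ with $\ln d$, but this loses a factor and needs a separate estimate. The rearrangement above avoids that.

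The only point needing care is the algebraic rearrangement: one multiplies through by $n-1>0$, which is legitimate since $n\ge 2$, and checks the signs. After that the two bounds are standard, so there is no real obstacle.
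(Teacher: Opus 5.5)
Your proof is correct: multiplying by $n-1>0$ turns the claim into $(n-1)\ln(1+1/n)<2\ln n$, and the chain $(n-1)\ln(1+1/n)<(n-1)/n<1<2\ln 2\le 2\ln n$ settles every $d\ge 3$, while $d=2$ is immediate from $K_2=2>\ln 2$. The paper does not write out a proof of this lemma; it only says it follows from simple arguments and well-known estimates for the logarithm, and your use of $\ln(1+x)<x$ is exactly such an argument.
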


\subsection{Improved version of Proposition \ref{main}}

The estimates in Proposition \ref{main} can be  improved using the information about $\Tr P_\rho\sigma$,
where $P_\rho$ is the projector onto the subspace $\,\supp\rho_A\otimes\supp\rho_B$. \smallskip

\begin{proposition}\label{main++}
Let $A$ and $B$ be quantum systems of any dimensions. Let $\rho$ be a pure state of the system $AB$ with the Schmidt representation
(\ref{smr}). Let $P_\rho$ be the projector onto the subspace $\,\supp\rho_A\otimes\supp\rho_B$ and $\,P^\bot_\rho\doteq I_{AB}-P_\rho$.  Then
\begin{equation}\label{main+++}
E_F(\rho)-E_F(\sigma)
\leq
K_d\bigl(1-\operatorname{Tr}\rho\sigma\bigr)-a\Tr P^\bot_\rho\sigma
\leq
K_d\frac{\|\rho-\sigma\|_1}{2}-a\Tr P^\bot_\rho\sigma
\end{equation}
for any state $\sigma$ in $\St(\HH_{AB})$, where  $K_d$ is defined in (\ref{Cdd++}) and $\,a\doteq K_d-\ln d>0$.

The first inequality in (\ref{main+++}) is sharp (optimal):
\[
\sup_{\sigma\ne\rho}\shs
\frac{E_F(\rho)-E_F(\sigma)}
{K_d\bigl(1-\operatorname{Tr}\rho\sigma\bigr)-a\Tr P^\bot_\rho\sigma}
=1.
\]
If $\,d\geq3\,$ then this supremum is attained at the pure state $\sigma$ defined in (\ref{smr+}).
\end{proposition}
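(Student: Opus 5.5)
The plan is to reuse the supporting functional built in the proof of Proposition \ref{main}, now keeping track of the constant $c$ in Lemma \ref{rl} instead of discarding it. The proof of Proposition \ref{main} shows that $\Lambda^0_\rho=K_d\rho-aI_\rho$, with $a=K_d-\ln d$, satisfies the conditions (\ref{con}) on $\HH_A^\rho\otimes\HH_B^\rho$. Apply Lemma \ref{rl} with $c=0$. This gives the global operator
\[
\Lambda^{\rm ext}_{\rho,0}=P_\rho\Lambda^0_\rho P_\rho=K_d\rho-aP_\rho ,
\]
where we use $P_\rho\rho P_\rho=\rho$. By Lemma \ref{rl} and Remark \ref{rlr}, this operator satisfies $\Tr\Lambda^{\rm ext}_{\rho,0}\sigma\le E_F(\sigma)$ for every $\sigma\in\St(\HH_{AB})$, and $\Tr\Lambda^{\rm ext}_{\rho,0}\rho=\ln d=E_F(\rho)$.

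Rewrite this inequality. Write $\Tr P_\rho\sigma=1-\Tr P^\bot_\rho\sigma$. Then
\[
E_F(\rho)-E_F(\sigma)\le \ln d-K_d\Tr\rho\sigma+a\bigl(1-\Tr P^\bot_\rho\sigma\bigr)=K_d(1-\Tr\rho\sigma)-a\Tr P^\bot_\rho\sigma ,
\]
since $\ln d+a=K_d$. This is the first inequality in (\ref{main+++}). The second inequality follows from $1-\Tr\rho\sigma\le\frac12\|\rho-\sigma\|_1$, exactly as in Proposition \ref{main}: $\rho$ is a projector, so $\Tr\rho(\rho-\sigma)\le\frac12\|\rho-\sigma\|_1$. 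The only point to check carefully is that $c=0$ is an admissible value in Lemma \ref{rl}; the lemma allows any $c\ge0$, and the argument there does not use positivity of $c$.

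For optimality, first note that the supremum is at most $1$ whenever the denominator is positive, by the first inequality. For $d\ge3$, take the pure state (\ref{smr+}). It is supported in $\supp\rho_A\otimes\supp\rho_B$, so $\Tr P^\bot_\rho\sigma=0$. The denominator then reduces to $K_d(1-\Tr\rho\sigma)$, and the direct computation from the proof of Proposition \ref{main} gives equality, so the ratio equals $1$.

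For $d=2$, use the family $\sigma_\varepsilon$ from that proof. These states also lie in the support subspace, and the ratio tends to $1$ as $\varepsilon\to0^+$.

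A minor obstacle is that the denominator could vanish or be negative for some $\sigma\ne\rho$. The supremum should therefore be read over those $\sigma$ with a positive denominator; this set contains the extremal examples above, so the stated value $1$ is not affected.
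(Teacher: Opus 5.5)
Your proposal is correct and follows the paper's proof essentially verbatim. You apply Lemma \ref{rl} with $c=0$ to $\Lambda^0_\rho$ to obtain the global supporting operator $\widetilde{\Lambda}_\rho=K_d\rho-aP_\rho$, rearrange, and inherit the second inequality and the optimality claims from Proposition \ref{main} (the extremal states satisfy $\Tr P^\bot_\rho\sigma=0$). Your caveat about the denominator is unnecessary: since $\rho\le P_\rho$, one has $K_d(1-\Tr\rho\sigma)-a\Tr P^\bot_\rho\sigma=K_d\Tr(P_\rho-\rho)\sigma+(\ln d)\Tr P^\bot_\rho\sigma$, which is strictly positive for every $\sigma\ne\rho$.
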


\emph{Proof.} To prove the first inequality in (\ref{main+++}) it suffices
to change the arguments from the proof of Proposition \ref{main} after (\ref{lsf}) as follows.

Lemma \ref{rl} and Remark \ref{rlr} (applied to the operator $\Lambda^0_{\rho}$) show that the functional $\,\tilde{\ell}_\rho(\sigma)=\Tr\widetilde{\Lambda}_{\rho}\sigma,$ where
\begin{equation}\label{new}
\widetilde{\Lambda}_{\rho}\doteq K_d\rho-a P_\rho,
\end{equation}
is a (global) supporting functional for the EoF  at the state $\rho$. This does not contradict the fact that the functional
$\,\ell_\rho(\sigma)=\Tr\Lambda_{\rho}\sigma\,$ (where $\Lambda_{\rho}$ is the operator defined in (\ref{old})) is a (global) supporting functional for the EoF  at the state $\rho$ (because a supporting functional is not unique, in general). Then the relations in (\ref{gsf}) hold with $\ell_\rho$ replaced by $\tilde{\ell}_\rho$. The second of them shows that
$$
K_d\Tr\rho\sigma-a\Tr P_\rho\sigma\leq E_F(\sigma).
$$
Since $\,E_F(\rho)=\ln d,$ this inequality implies the first inequality in (\ref{main+++}).\smallskip

Another way to prove the first inequality in (\ref{main+++}) is to derive it from the first inequality in (\ref{main+}) by using the
selective LOCC-monotonicity of the EoF (via the measurement induced by the projector $P_\rho$).\smallskip

The second inequality in (\ref{main+++}) can be derived from the first one by using the well known continuity bound for the function
$\,\vartheta\mapsto\Tr\rho\vartheta\,$ (see, f.i., Example 2 in \cite{QC}), since the spectral diameter of operator $\rho$ is equal to $1$. $\Box$
\medskip

The same improvement can be done in the estimates in Corollary \ref{main-c}: the terms
$\,K_d\bigl(1-F(\rho,\sigma)\bigr)\,$ and $\,K_d\,\beta^2(\rho,\sigma)\,$ in (\ref{main-c+}) can be replaced with
$$
K_d\bigl(1-F(\rho,\sigma)\bigr)-a\Tr P^\bot_\rho\sigma\quad\textrm{ and }\quad K_d\,\beta^2(\rho,\sigma)-a\Tr P^\bot_\rho\sigma,
$$
respectively.\pagebreak

\textbf{Open question:} Is the constant $K_d$ in the second  inequalities  in (\ref{main+}) and in (\ref{main+++}) is optimal?
By Theorem 1 in \cite{H&Sh-5} this question can be reformulated as follows: \emph{does it exist a (global) supporting functional $\,\ell'_\rho(\sigma)=\Tr\Lambda'_{\rho}\sigma\,$ for the EoF  at the pure state $\rho$ with the Schmidt representation (\ref{smr}) generated by some Hermitian operator $\Lambda'_{\rho}$ with the spectral diameter less than $K_d$?} Note
that the spectral diameter of the operators  $\Lambda_{\rho}$  and $\widetilde{\Lambda}_{\rho}$  (defined in (\ref{old}) and in  (\ref{new})) is equal to $K_d$.\footnote{$K_{d}$ is the optimal constant in the log-Sobolev inequality for complete graph with $d$ vertices.}


\end{document}